\begin{document}
\title[DeepGini: Prioritizing Massive Tests to Enhance the Robustness of Deep Neural Networks]{DeepGini: Prioritizing Massive Tests to Enhance the Robustness of Deep Neural Networks}

\author{Yang Feng}
\affiliation{
	\institution{State Key Lab for Novel Software Technology, Nanjing University}
	\city{Nanjing}
	\country{China}
	}
\email{fengyang@nju.edu.cn}

\author{Qingkai Shi}
\affiliation{
	\institution{The Hong Kong University of Science and Technology}
	\city{Hong Kong}
	\country{China}
	}
\email{qshiaa@cse.ust.hk}

\author{Xinyu Gao}
\affiliation{
	\institution{State Key Lab for Novel Software Technology, Nanjing University}
	\city{Nanjing}
	\country{China}
}
\email{mf1932046@smail.nju.edu.cn}

\author{Jun Wan}
\affiliation{
	\institution{Ant Financial Services Group}
	\city{Hangzhou}
	\country{China}
}
\email{wukun.wj@antfin.com}

\author{Chunrong Fang}
\affiliation{%
	\institution{State Key Lab for Novel Software Technology, Nanjing University}
	\city{Nanjing}
	\country{China}
}
\email{fangchunrong@nju.edu.cn}

\author{Zhenyu Chen}
\affiliation{%
	\institution{State Key Lab for Novel Software Technology, Nanjing University}
	\city{Nanjing}
	\country{China}
}
\email{zychen@nju.edu.cn}

\begin{abstract}
	Deep neural networks (DNN) have been deployed in many software systems to assist in various classification tasks.
In company with the fantastic effectiveness in classification, DNNs could also exhibit incorrect behaviors and result in accidents and losses.
Therefore, testing techniques that can detect incorrect DNN behaviors and improve DNN quality are extremely necessary and critical.
However, the testing oracle, which defines the correct output for a given input, is often not available in the automated testing.
To obtain the oracle information, the testing tasks of DNN-based systems usually require expensive human efforts to label the testing data, 
which significantly slows down the process of quality assurance.

To mitigate this problem, we propose DeepGini, a test prioritization technique designed based on a statistical perspective of DNN.
Such a statistical perspective allows us to reduce the problem of measuring misclassification probability to the problem of measuring set impurity,
which allows us to quickly identify possibly-misclassified tests.
To evaluate, we conduct an extensive empirical study on popular datasets and prevalent DNN models. 
The experimental results demonstrate that DeepGini outperforms existing coverage-based techniques in prioritizing tests regarding both effectiveness and efficiency.
Meanwhile, we observe that the tests prioritized at the front by DeepGini are more effective in improving the DNN quality in comparison with the coverage-based techniques.

\end{abstract}

%
%



\ccsdesc[300]{Software and its engineering~Software testing and debugging}

\keywords{Deep Learning, Test Case Prioritization, Deep Learning Testing.}

\maketitle

\section{Introduction}
\label{sec:intro}

We are entering the era of deep learning, which has been widely adopted in many areas.
Famous applications of deep learning include image classification \cite{he2016deep}, autonomous driving \cite{bojarski2016end},
speech recognition \cite{xiong2016achieving}, playing games \cite{silver2016mastering}, and so on.
Although for the well-defined tasks, such as in the case of Go \cite{silver2016mastering}, deep learning has achieved or even surpassed the human-level capability, it still has many issues on reliability and quality. 
These issues could cause significant losses such as in the accidents caused by the self-driving car of Google and Tesla~\cite{GoogleAccident,TeslaAccident,TeslaAccident2}.

Testing is considered to be the common practice for software quality assurance.
However, testing on DNN-based software is significantly different from conventional software because, while conventional software depends on programmers to manually build up the business logic, DNNs are constructed based on a data-driven programming paradigm. 
Thus, sufficient test data, with oracle information, is critical for detecting misbehaviors of DNN-based software.
Unfortunately, like the testing techniques for conventional software, DNN testing also faces the problem that automated testing oracle is often unavailable.
For example, it costs more than 49,000 workers from 167 countries for about 9 years to label the data in ImageNet~\cite{deng2009imagenet}, one of the largest visual recognition datasets containing millions of images in more than 20,000 categories.

Specially, in the context of testing DNN-based systems, 
software testers often focus on the tests that can cause the system to behave incorrectly, because diagnosing these tests can provide insights into various problems in the program.
This fact naturally motivates us to propose a technique to prioritize tests so that
fault-inducing tests can be labeled and analyzed before the other tests.
In this manner, we can obtain maximum benefit from human efforts, even the labeling process is prematurely halted at some arbitrary point due to resource limit.

In the past decades, many test prioritization techniques have been proposed for conventional software systems~\cite{rothermel2001prioritizing,di2013coverage,yoo2012regression}. 
In these techniques, code coverage is employed as the metric to guide the prioritization procedure.
Two main coverage-based techniques are known as coverage-total and coverage-additional test prioritization~\cite{yoo2012regression}.
A coverage-total method prioritizes tests based on their individual total coverage rate. 
That is, we prefer a test to the other one if it covers more program elements. 
A coverage-additional method differs from the coverage-total method in that,
it prefers a test if it can cover more program elements that have not been covered by previous tests.

Unfortunately, for DNN-based systems, although several neuron-coverage criteria have been proposed by software engineering researchers~\cite{pei2017deepxplore,ma2018deepgauge},
the aforementioned coverage-based methods are not effective as expected, due to some new challenges:
\begin{itemize}
	\item First, these criteria are proposed to measure testing adequacy.
	It is often not clear how to improve the DNN quality after testing a DNN with these coverage criteria.
	
	\item Second,
	some coverage criteria cannot distinguish the fault detection capability of different tests. 
	Thus, we cannot prioritize them effectively using the coverage-total prioritization method.
	For example, given a DNN, every test of the DNN has the same top-$k$ neuron coverage rate~\cite{ma2018deepgauge}.
	As a result, the coverage-total method becomes meaningless using this coverage criterion.
	
	\item Third, for most of existing neuron coverage criteria, only a few tests in a test set can achieve the maximum coverage rate of the set.
	For example, using the top-$k$ neuron coverage~\cite{ma2018deepgauge}, we only need about 1\% tests in a test set to achieve the maximum coverage rate of the test set.
	In this case, the coverage-additional method becomes useless because it stops working after prioritizing the first 1\% tests.
	
	\item Fourth, coverage-additional method usually takes very high time complexity $O(mn^2)$, where $m$ is the number of elements, e.g., neurons, to cover and $n$ is the number of tests. Since $n$ and $m$ are usually very large for a DNN, this method is not scalable.
\end{itemize}

To overcome the aforementioned problems and effectively improve the DNN quality, 
in this paper, we propose a test prioritization technique called DeepGini, especially for image-classification DNNs.
DeepGini does not prioritize tests as conventional coverage-based approaches but is based on
a statistical perspective of DNN. Such a statistical perspective allows us to reduce the problem of measuring misclassification probability to the problem of measuring set impurity~\cite{quinlan1986induction}.
Intuitively,
a test is likely to be misclassified by a DNN if the DNN outputs similar probabilities for each class.
Thus, this metric yields the maximum value when DNN outputs the same probability for each class.
For example, if a DNN outputs a vector $\langle 0.5, 0.5 \rangle$,
it means that the DNN is not confident about its classification because
the test has the same probability (i.e., $0.5$) to be classified into the two classes.
In this case, the DNN is more likely to make mistakes.
In contrast, if the DNN outputs $\langle 0.9, 0.1 \rangle$,
it implies that the DNN is confident that the test should be classified into the first class.
Compared to the coverage-based approaches,
DeepGini has the following advantages:
\begin{itemize}
	\item Tests are more distinguishable using our metric than existing neuron coverage criteria.
	This is because it is not likely that different tests have the same output vector
	but tests usually have the same coverage rate as discussed above.
	
	\item It is not necessary for us to record a great deal of intermediate information to compute coverage rate. We prioritize tests only based on the output vector of a DNN. 
	Since it is not necessary for us to understand the internal structure of a DNN,
	our approach is much easier to use. Meanwhile, it is also more secure because we do not need to look into a DNN and, thus, sensitive information in a DNN is protected.
	
	\item The time complexity of DeepGini is the same with the coverage-total approach but much less than the coverage-additional approach. Thus, our approach is as scalable as coverage-total approaches but much more scalable than coverage-additional approaches.
	
	\item Tests prioritized at the front by DeepGini are more effective to improve the DNN quality than those prioritized at the back and those prioritized at the front but by coverage-based prioritization techniques.
\end{itemize}

We notice that our approach requires to run all tests to obtain the output vectors so that the likelihood of misclassification can be calculated.
However, we argue that this is not a significant weakness.
First, this issue is shared with all coverage-based prioritization methods
as they also need to run tests to obtain the coverage rates.
Second, the time cost to run a DNN is not time-consuming like training the DNN.
Compared to the expensive cost of manually labeling all tests in a messy order,
the time cost is completely negligible.

We compare DeepGini with coverage-based methods using existing neuron coverage criteria.
The effectiveness of our approach is evaluated from two aspects.
First,
we compute the value of Average Percentage of Fault-Detection (APFD)~\cite{yoo2012regression},
which is a standard method of evaluating prioritization techniques.
Second,
we evaluate if our technique can improve the quality of DNN.
To this end, we add the tests prioritized at the front to the training set and compare the accuracy of the re-trained DNN to the original one.
The experimental results demonstrate that DeepGini is close to the optimal solution in terms of the value of APFD,
and DeepGini is also more effective to improve the DNN quality.
In summary, we make the following contributions in this paper:
\begin{itemize}
	\item We propose an effective and efficient approach, DeepGini, to prioritizing DNN tests. 
	\item We demonstrate that tests prioritized at the front by DeepGini are effective to improve the DNN quality.
	\item We show the weaknesses of neuron coverage criteria in test prioritization and DNN enhancement.
\end{itemize}

The remainder of the paper is organized as follows. 
Section \ref{sec:bkg} introduces the background knowledge of deep learning and test prioritization.
Section \ref{sec:approach} presents our approach to prioritizing tests and its application to improving the DNN quality. 
Section \ref{sec:expdesign} takes us to the empirical study, in which we introduce the settings of the evaluation.
Section~\ref{sec:resultanalysis} discusses the experimental results, which
demonstrate the effectiveness and efficiency of our approach.
Section \ref{sec:relwork} surveys the related work and Section \ref{sec:conc} concludes this paper.

\section{Background}
\label{sec:bkg}

In this section, we introduce the basic knowledge of DNN and conventional test prioritization techniques.

\subsection{Deep Neural Networks}
\label{subsec:dnn}

Classification deep neural networks (DNN) are the core of many deep learning systems.
As shown in Figure \ref{fig:dnn}, a DNN consists of multiple layers, i.e., an input layer, an output layer, and one or more hidden layers.
Each layer is made up of a series of neurons. The neurons from different layers are interconnected by weighted edges.
Each neuron is a computing unit that applies an activation function on its inputs and the weights of the incoming edges.
The computed result is passed to the next layer through the edges.
The weights of the edges are not specified directly by software developers, but automatically learned by a training process with a large set of labeled training data.
After training, a DNN then can be used to automatically classify an input object, e.g., an image with an animal, into its corresponding class, e.g., the animal species.

\begin{figure}[t]
	\centering
	\includegraphics[width=\linewidth]{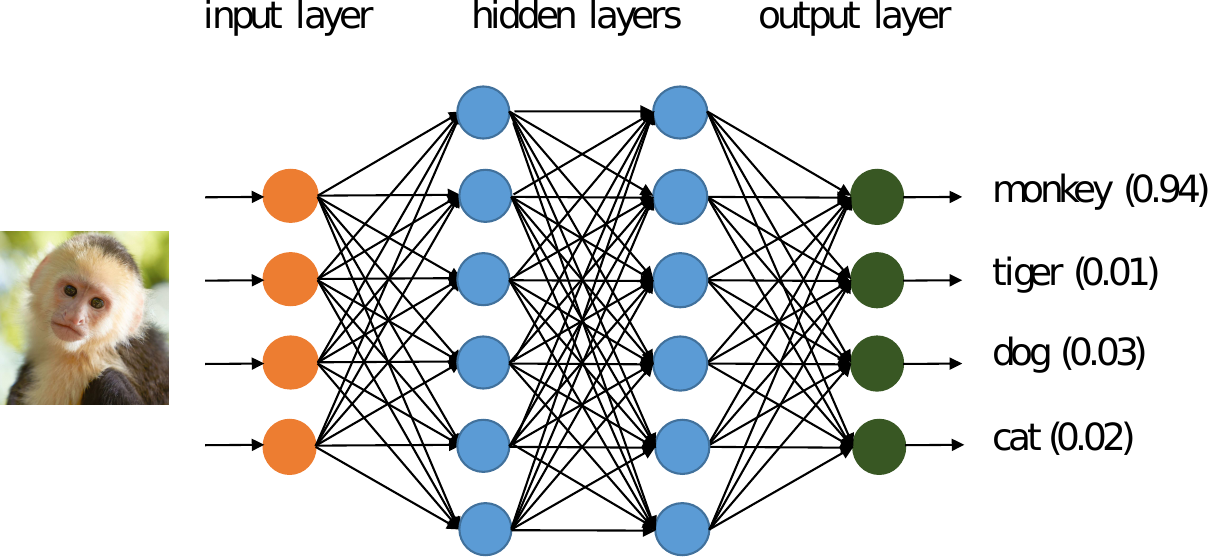}
	\caption{An example to illustrate the DNN structure.}
	\label{fig:dnn}
\end{figure}

Suppose we have a DNN that classifies objects into $N$ classes.
Given an input, the DNN will output a vector of $N$ values, e.g., $\langle v_1, v_2, \cdots, v_N\rangle$, each of which represents how much the system thinks the input corresponds to each class.
Using a softmax function~\cite{denker1991transforming}, it is easy to normalize this vector to $\langle p_1, p_2, \cdots, p_N\rangle$ where
$\Sigma_{i=1}^N p_i = 1$ and $p_i$ indicates the probability that an input belongs to the \textit{i}th class.
From now on, with no loss of generality, we assume that the output vector of a DNN is a vector of probabilities as described above.

\subsection{Neuron Coverage Criteria}
\label{subsec:coverage_dnn}

To enhance the quality and robustness of DNNs, in the past decade, software engineering researchers have proposed a series of neuron coverage criteria specifically for DNN testing~\cite{pei2017deepxplore,ma2018deepgauge,wicker2018feature,kim2019guiding}.
In this section, we survey the related papers published on peer reviewed venues as follows.

\textbf{Neuron Activation Coverage (NAC($k$))} \cite{pei2017deepxplore}.
NAC($k$) is proposed based on the assumption that
higher activation coverage implies that more states of a DNN could be explored.
The parameter $k$ of this coverage criterion is defined by users 
and specifies how a neuron in a DNN can be counted as covered.
That is, if the output of a neuron is larger than $k$, then this neuron will be counted as covered.
The rate of NAC($k$) for a test is defined as the ratio of the number of covered neurons to the total number of neurons.

\textbf{$k$-Multisection Neuron Coverage (KMNC($k$))} \cite{ma2018deepgauge}.
Suppose that the output of a neuron $o$ is located in an interval $[\textit{low}_o, \textit{high}_o]$,
where $\textit{low}_o$ and $\textit{high}_o$ are recorded in the training process.
To use this coverage criterion,
the interval $[\textit{low}_o, \textit{high}_o]$ is divided into $k$ equal sections,
and the goal is to cover all the sections.
We say a section is covered by a test if and only if the neuron output is located in the section when the DNN is run against the test.
The rate of KMNC($k$) for a test is defined as the ratio of the number of covered sections to the total number of sections.
Here, the total number of sections is equal to $k$ times the total number of neurons.

In most cases, a single test must cover a section in $[\textit{low}_o, \textit{high}_o]$ of each neuron.
Only a tiny number of tests do not cover a section in the interval, but cover the boundaries, i.e., $(-\infty, \textit{low}_o]$ and $[\textit{high}_o, +\infty)$.
Thus, almost all single tests have the same coverage rate of KMNC($k$).

\textbf{Neuron Boundary Coverage (NBC($k$))}~\cite{ma2018deepgauge}.
Different from KMNC($k$),
NBC($k$) does not aim to cover all sections in $[\textit{low}_o, \textit{high}_o]$.
Instead, it targets to cover the boundaries, i.e., $(-\infty, \textit{low}_o]$ and $[\textit{high}_o, +\infty)$.
Using this coverage criterion, we can expect to cover more corner cases.
In practice, 
it is not necessary to directly use $\textit{low}_o$ and $\textit{high}_o$ as the boundaries.
Instead, $\textit{low}_o - k\sigma$ and $\textit{high}_o + k\sigma$ can be used.
Here, $\sigma$ is the standard deviation of the outputs of a neuron recorded in the training process.
$k$ is a user-defined parameter.
The rate of NBC($k$) for a test is defined as the ratio of the number of covered boundaries to the total number of boundaries.
Since each neuron has one upper bound and one lower bound,
the total number of boundaries is twice the number of neurons.

\textbf{Strong Neuron Activation Coverage (SNAC($k$))} \cite{ma2018deepgauge}.
SNAC($k$) can be regarded as a special case of NBC($k$) as it only takes upper boundary into consideration.
Thus, it is defined as the ratio of the number of covered upper boundaries
to the total number of upper boundaries, in which the latter is actually equal to the number of neurons in a DNN.

\textbf{Top-$k$ Neuron Coverage (TKNC($k$))} \cite{ma2018deepgauge}.
TKNC($k$) measures how many neurons have once been the most active $k$ neurons on
each layer. It is defined as the ratio of the total number of top-$k$
neurons on each layer to the total number of neurons in a DNN.
We say a neuron is covered by a test 
if and only if when the DNN is run against the test, the output of the neuron is larger than or equal to the $k$th highest value in the layer of the neuron.

It is noteworthy that, according to this definition, this metric only can be used to compare two test sets with more than one test.
For a single test, it always covers $k$ neurons in each layer of a DNN.
Thus, TKNC($k$) is always the same for two single tests and, thus, cannot 
distinguish them.

\textbf{Likelihood- and Distance-based Surprise Coverage (LSC($k$) and DSC($k$))} \cite{kim2019guiding}.
Surprise coverage relies on the concept of surprise adequacy $SA(x)$,
which measures the dissimilarity between a test $x$ and the training data set.
The parameter $k$ here is a pair $(n, u)$.
Given an upper bound $u$ and buckets $B=\{b_1, b_2, \cdots, b_n\}$ that divides $(0, u]$ into $n$ $SA$ segments,
the surprise coverage rate of a set $X$ of tests is defined as
$$
SC(X) = \frac{|\{ b_i | \exists x\in X: SA(x)\in (u*\frac{i-1}{n}, u*\frac{i}{n}] \}|}{n}
$$
\\

LSC and DSC, two special types of surprise coverage, rely on 
likelihood-based surprise adequacy (LSA) and distance-based surprise adequacy (DSA), respectively.
LSA uses kernel density estimation~\cite{matt1994kernel} to estimate the surprise adequacy
while DSA uses Euclidean distance.
The details on the computation of DSA and LSA are omitted
and can be found in the original paper~\cite{kim2019guiding}.

\subsection{Coverage-Based Test Prioritization}
\label{subsec:coverage-tcp}

In conventional software testing, test prioritization (a.k.a., test case prioritization) is actually a classic problem defined by Rothermel et al. \cite{rothermel2001prioritizing} as following:

\textbf{Test Prioritization.} Given a test set $T$, the set $PT$ of the permutations of $T$, and a function $f$ from $PT$ to the real numbers, the test prioritization problem is to find $T'\in PT$ such that 
$$
\forall T''\in PT\setminus\{T'\}: f(T') \ge f(T'').
$$
Here, $f(T'\in PT)$ yields an award value for a permutation.

In the past decades, many test prioritization techniques have been proposed for conventional software.
Most of these techniques are based on various code coverage information and follow the basic assumption that early maximization of coverage would lead to early detection of faults~\cite{di2013coverage}. 
Two main coverage-based techniques are known as the coverage-total prioritization and the coverage-additional prioritization~\cite{yoo2012regression}.

\textbf{Coverage-Total Method (CTM).}
A CTM is an implementation of the ``next best'' strategy.
It always selects the test with the highest coverage rate,
followed by the test with the second-highest coverage rate, and so on.
For tests with the same coverage rate, the method will prioritize them randomly.
For the example in Table \ref{tab:tcpex},
both $A, B, C, D$ and $A, B, D, C$ are valid results of CTM.

CTM is attractive because it is relatively efficient and easy to implement.
Given a set consisting of $n$ tests with their coverage rates,
CTM only needs to sort these tests according to their coverage rates.
Typically, using a quick sort algorithm, it only takes $O(n\log n)$ time \cite{cormen2009introduction}.

\textbf{Coverage-Additional Method (CAM).}
CAM differs from CTM in that it selects the next test according to the feedback from previous selections.
It iteratively selects a test that can cover more uncovered code structures.
In this manner, we can expect that we can achieve the maximum coverage rate of a test set as soon as possible.
After the maximum coverage rate is achieved,
we can use CTM to prioritize the remaining unprioritized tests.
For the example in Table \ref{tab:tcpex},
$A, D, C, B$ is the only valid result of CAM.

Given a program with $m$ elements to cover and a set of $n$ tests,
every time we select a test, it will take $O(mn)$ time to re-adjust the coverage information of the remaining tests.
This process will be performed $O(n)$ times. Thus, the total time cost is $O(mn^2)$.
According to the time complexity, it is easy to find that CAM is less scalable compared to CTM,
especially when $n$ and $m$ are very large.

\begin{table}[t]
	\centering
	\caption{An example to illustrate coverage-based test prioritization. `X' means a statement is covered by a test.}
	\begin{tabular}{c|cccccccc}
		\toprule
		\multirow{2}{*}{Test} & \multicolumn{8}{c}{Program Statement} \\
		& 1  & 2  & 3  & 4  & 5  & 6  & 7  & 8  \\
		\midrule
		$A$                     & X  & X  & X  &    &    & X  & X  & X  \\
		$B$                     & X  & X  & X  &    &    &    & X  & X  \\
		$C$                     & X  & X  & X  & X  &    &    &    &    \\
		$D$                     &    &    &    &    & X  & X  & X  & X \\
		\bottomrule
	\end{tabular}
	\label{tab:tcpex}
\end{table}

\section{Approach}
\label{sec:approach}

Owing to the oracle problem discussed before,
test prioritization can help label and analyze as many misclassified tests as possible in a limited time.
However, due to the problems we argued in Section \ref{sec:intro} and as we will show in our evaluation,
coverage-based test prioritization becomes ineffective in the context of DNN testing.
Therefore, we propose a test prioritization method that is not based on neuron coverage criteria,
but based on a statistical view of DNN as discussed in Section \ref{subsec:view}.
Such a statistical view inspires us to propose a method, called DeepGini, 
to prioritize tests of a DNN, which is presented in Section \ref{subsec:tp}.
Section~\ref{subsec:app} discusses how to improve DNN quality with DeepGini.

\subsection{A Statistical View of DNN}
\label{subsec:view}

DNNs are specially good at classifying high-dimensional objects. 
If we regard each output class of a DNN as a kind of feature of the input object,
the computation (or classification) process of a DNN actually maps the original high-dimensional data to only a few kinds of features.
As an example, suppose the input of a DNN is a 28x28 image with three channels (i.e., RGB channels). 
Then the original dimension of the image is $3^{28\times 28}$. 
In Figure \ref{fig:dnn}, the DNN maps the high-dimension object to a bag (or multi-set)\footnote{A multi-set or a bag is a special kind of set that allows duplicate elements.} 
$B$ of features, 
in which 94\% are features of monkey, 1\% are features of tiger, 3\% are features of dog, and 2\% are features of cat.
Since most elements in $B$ are features of monkey, we classify the input object into the monkey class.

Generally, if the feature bag has the highest purity, i.e., contains only one kind of features (e.g., 100\% elements in $B$ are features of monkey), 
then there will be no other features confusing our classification and it is more likely that a test input is correctly classified.
As an example, in Figure~\ref{fig:purity},
Bag 2 has higher purity than Bag 1.
Intuitively, this is because almost all elements in Bag 2 are triangles while Bag 1 has the same number of triangles and circles.

Statistically, if a bag has higher purity, the results of two random samplings in the bag have higher probability to be the same.
In contrast, if a bag has lower purity, the results of two random samplings in the bag are more likely to be different.
For the example in Figure~\ref{fig:purity},
using sampling with replacement,\footnote{In sampling with replacement, 
	after we sample a feature from the feature bag, the feature is put back to the bag so that we have the same probability to get the feature next time.}
the probability that two random samplings have the same shape is $0.5^2 + 0.5^2 = 0.5$ and $0.1^2 + 0.9^2 = 0.82$ for Bag 1 and Bag 2, respectively.

Formally, assuming the feature distribution in the feature bag output by a DNN is $\langle p_1, p_2, \cdots, p_N\rangle$,
we can compute the probability that two random samplings have different results as $1 - \Sigma_{i=1}^N p_i^2$.
The lower the probability, the higher the purity and, thus, the more likely a test input of a DNN is correctly classified.

\begin{figure}[t]
	\centering
	\includegraphics[width=\linewidth]{./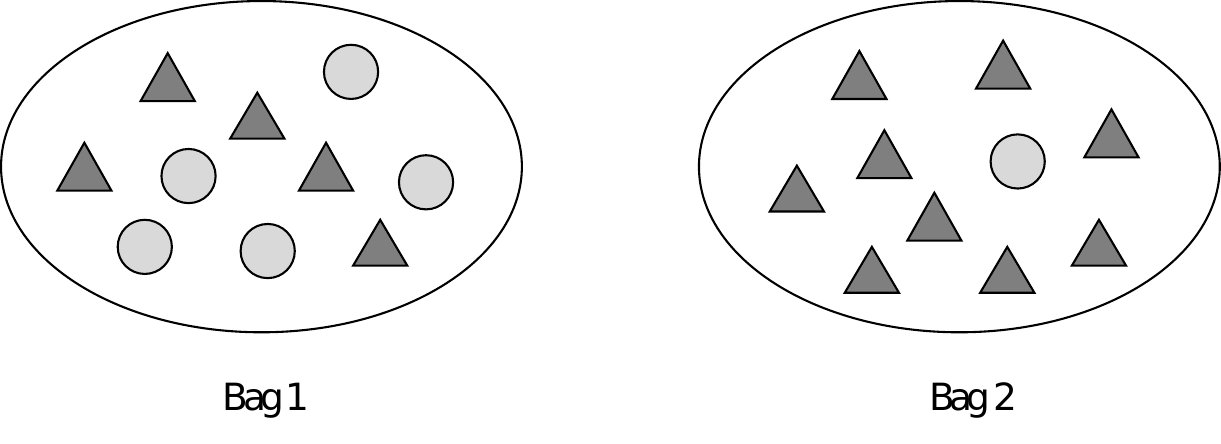}
	\caption{Bag 2 has higher purity than Bag 1. Bag 1 has 50\% triangles and 50\% circles. Bag 2 has 90\% triangles and 10\% circles.}
	\label{fig:purity}
\end{figure}

On the statistical view, we can observe that the problem of measuring the likelihood of misclassification actually has been reduced to the problem of measuring the purity of a bag.
In fact, such a reduction follows the very spirit of the measurement of Gini impurity~\cite{quinlan1986induction},
which inspires us to propose DeepGini for measuring the likelihood of misclassification.

\subsection{DeepGini: Prioritizing Tests of a DNN}\label{subsec:tp}

Formally, the metric we use to measure the likelihood of misclassification is defined as below.

\begin{definition}
	Given a test $t$ and a DNN that outputs $\langle p_{t,1}, p_{t,2}, \cdots,$ $p_{t,N}\rangle$ $(\Sigma_{i=1}^N p_{t, i} = 1)$,
	we define $\xi(t)$ to measure the likelihood of $t$ being misclassified:
	$$
	\xi(t) = 1 - \Sigma_{i=1}^N p_{t, i}^2
	$$
\end{definition}


In the definition, $p_{t, i}$ is the probability that the test $t$ belongs to the class $i$.
Figure \ref{fig:gini} illustrates the distribution of $\xi$ when the DNN performs a binary classification.
The distribution illustrates that when DNN outputs the same probability for the two classes,
$\xi$ has the maximum value, indicating that we have high probability to incorrectly classify the input test.
This result follows our intuition that a test is likely to be misclassified if the DNN outputs similar probabilities for each class,
and the rationality of the result has been explained in the previous subsection.
The following theorem demonstrates that even though a DNN classifies input tests into more than two classes,
$\xi$ has a similar distribution as that in Figure~\ref{fig:gini}.

\begin{theorem}
	$\xi(t)$ has the unique maximum if and only if $\forall 1\le i, j\le N: p_{t, i} = p_{t, j}$.
\end{theorem}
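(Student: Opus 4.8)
The plan is to maximize $\xi(t) = 1 - \sum_{i=1}^N p_{t,i}^2$ over the probability simplex $\Delta = \{\,\langle p_1,\dots,p_N\rangle : p_i \ge 0,\ \sum_i p_i = 1\,\}$, which is equivalent to minimizing $g(p) = \sum_{i=1}^N p_i^2$. Since $g$ is strictly convex and $\Delta$ is a compact convex set, $g$ attains a minimum; the uniqueness of the minimizer will follow from strict convexity, and the claim reduces to showing that the minimizer is the uniform vector $\langle 1/N,\dots,1/N\rangle$. First I would argue existence directly: $g$ is continuous on the compact set $\Delta$, so the minimum is attained. Then I would identify the minimizer.

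There are two clean ways to pin down the minimizer, and I would present the more elementary one. The quickest is the Cauchy--Schwarz inequality: $1 = \bigl(\sum_{i=1}^N p_i\bigr)^2 \le N \sum_{i=1}^N p_i^2$, so $\sum_i p_i^2 \ge 1/N$, with equality if and only if the vectors $\langle p_1,\dots,p_N\rangle$ and $\langle 1,\dots,1\rangle$ are proportional, i.e. $p_i = p_j$ for all $i,j$; combined with $\sum_i p_i = 1$ this forces $p_i = 1/N$. Hence $\xi(t) \le 1 - 1/N$ with equality exactly at the uniform distribution, which establishes both that the uniform point is the unique maximizer of $\xi$ and the value $1-1/N$ of that maximum. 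An alternative I might mention in a sentence is the algebraic identity $\sum_i p_i^2 = 1/N + \sum_i (p_i - 1/N)^2$, obtained by expanding the square and using $\sum_i p_i = 1$; the right-hand side is clearly minimized exactly when every $p_i = 1/N$.

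To finish the ``if and only if'' as stated, I would spell out both directions explicitly. For the reverse direction: if $p_{t,i} = p_{t,j}$ for all $i,j$, then together with the normalization $\sum_i p_{t,i} = 1$ we get $p_{t,i} = 1/N$ for every $i$, and a direct computation gives $\xi(t) = 1 - N\cdot(1/N)^2 = 1 - 1/N$, which by the inequality above is the maximal possible value. For the forward direction: if $\xi(t)$ attains its maximum, then $\sum_i p_{t,i}^2$ attains its minimum $1/N$, so the Cauchy--Schwarz equality condition forces all coordinates equal. I do not anticipate a genuine obstacle here; the only thing to be careful about is not to conflate ``critical point of the Lagrangian'' with ``global maximizer'' without invoking convexity (or, as above, sidestepping Lagrange multipliers entirely via Cauchy--Schwarz), and to state clearly that the normalization constraint is what turns ``all coordinates equal'' into ``all coordinates equal to $1/N$.''
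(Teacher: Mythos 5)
Your proof is correct, and it takes a genuinely different route from the paper. The paper applies the Lagrangian multiplier method: it sets the partial derivatives of $L(p_{t,i},\lambda)=\xi(t)+\lambda(\Sigma_i p_{t,i}-1)$ to zero, deduces $p_{t,i}=p_{t,j}$ by subtracting pairs of equations, and then argues the critical point is the unique maximum because the Hessian $\mathrm{diag}(-2,\dots,-2)$ is negative definite. Your argument instead minimizes $\sum_i p_i^2$ over the simplex via Cauchy--Schwarz (or the identity $\sum_i p_i^2 = 1/N+\sum_i(p_i-1/N)^2$), obtaining the global bound $\xi(t)\le 1-1/N$ together with its exact equality condition in one step. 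What your approach buys is completeness and rigor that the paper's argument somewhat glosses over: the Lagrange-multiplier computation only locates interior critical points and, strictly speaking, establishing a \emph{global} unique maximum from a negative-definite Hessian requires an appeal to concavity (and a separate word about the boundary of the simplex where some $p_i=0$); your convexity/Cauchy--Schwarz route handles all of this automatically, works on the closed simplex including its boundary, and yields the explicit maximal value $1-1/N$ as a bonus. What the paper's approach buys is mainly expository familiarity -- Lagrange multipliers are the standard tool readers expect for constrained optimization -- but your version is the more elementary and the more airtight of the two.
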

\begin{proof}
	According to Lagrangian multiplier method \cite{rockafellar1993lagrange}, let
	$$
	L(p_{t, i}, \lambda) = \xi(t) + \lambda\times(\Sigma_{i=1}^N p_{t, i} - 1)
	$$
	$\forall p_{t, i}$, let
	$$
	\frac{\partial L}{\partial p_{t, 1}} = -2 p_{t, 1} + \lambda =0
	$$
	$$
	\frac{\partial L}{\partial p_{t, 2}} = -2 p_{t, 2} + \lambda =0
	$$
	$$
	\vdots
	$$
	$$
	\frac{\partial L}{\partial p_{t, N}} = -2 p_{t, N} + \lambda =0
	$$
	
	If we calculate the difference of any two above equations
(e.g. the $i$th and $j$th equation), we will have
$$
   2 p_{t, i} - 2 p_{t,j} = 0 \Rightarrow p_{t,  i} = p_{t, j}
$$

Hence, when $p_{t, 1} =p_{t, 2} = \cdots = p_{t, N} = 1/N$, $\xi(t)$ has the unique extremum.

At the point $(p_{t, 1}, p_{t, 2}, \cdots, p_{t, N})$, the Hessian matrix \cite{binmore2002calculus} of $\xi$ is
$$
\begin{bmatrix}
-2 & 0 & \dots  & 0 \\
0 & -2 & \dots  & 0 \\
\vdots & \vdots & \ddots & \vdots \\
0 & 0 & \dots  & -2
\end{bmatrix}
$$
which is a negative definite matrix. This implies that the unique extremum must be the unique maximum \cite{binmore2002calculus}.

\end{proof}


\begin{figure}
	\centering
	\includegraphics[width=\linewidth]{./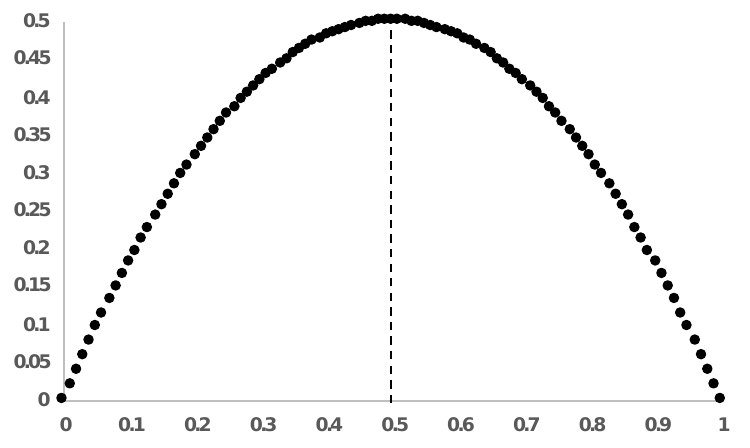}
	\caption{Distribution of $\xi$ for 2-class problem. X-Axis: the probability that a test input belongs to one of the two classes. Y-Axis: the value of $\xi$.}
	\label{fig:gini}
\end{figure}

We notice that many other metrics such as information entropy~\cite{shannon1948mathematical} also have the above property and is almost equivalent to $\xi$~\cite{raileanu2004theoretical}.
The difference is that it may require a non-statistical view, e.g., the perspective of information theory, to explain the rationality.
In addition, we believe that the simplest is the best: the complexity of computing quadratic sum is much easier than that of computing entropy-like metrics because they require logarithmic computation. 



According to the above discussion,
$\xi(t_1) > \xi(t_2)$ implies that $t_1$ is more likely to be misclassified.
Hence, to prioritize $n$ tests in a set,
we need to run the tests to collect the outputs,
and then sort these tests $t_i$ according to the value of $\xi$.

We argue that the time cost of running the tests is negligible.
First, the time cost to run a DNN is not time-consuming like training the DNN.
Compared to the expensive cost of manually labeling all tests in a messy order,
the time cost is completely negligible.
Second, this issue is shared with all neuron-coverage-based test prioritization methods
as they also need to to run tests to obtain the coverage rates.

\begin{example}
	Assume that we have four tests $A, B, C$, and $D$ as well as 
	a DNN tries to classify them into three classes. Table \ref{tab:xi}
	shows their output vectors and the values of $\xi$.	
	According to the values of $\xi$, we can prioritize the tests as $D, A, C$, and $B$.
	$D$ has the highest probability to be misclassified because the DNN outputs the most similar probabilities for each of the three classes.
	In comparison, for $B$ and $C$, the DNN is more confident about their classes as $B$ has the probability of 0.8 to be classified into the third class
	and $C$ has the probability of 0.6 to be classified into the first class.
\end{example}

\begin{table}[t]
	\centering
	\caption{An example to show how DeepGini prioritizes tests.}
	\begin{tabular}{c|cc}
		\toprule
		Test & Output of DNN & $\xi$ \\
		\midrule
		$A$                     & $\langle 0.3, 0.5, 0.2\rangle$ & 0.62  \\
		$B$                     & $\langle 0.1, 0.1, 0.8 \rangle$  & 0.34   \\
		$C$                     & $\langle 0.6, 0.3, 0.1\rangle$ & 0.54    \\
		$D$                     & $\langle 0.4, 0.4, 0.2\rangle$   & 0.64 \\
		\bottomrule
	\end{tabular}
	\label{tab:xi}
\end{table}

Typically, in our prioritization method, we can simply use a quick sort algorithm to sort tests.
This algorithm takes $O(n\log n)$ time complexity.
Compared to CTM and CAM, our approach has following merits:
\begin{itemize}
	\item The time complexity of our approach is the same with CTM and is much lower than CAM ($O(mn^2)$). Thus, our approach is as scalable as CTM and much more scalable than CAM.
	\item Different from CTM and CAM, we only need to record output vectors while CTM and CAM require us to profile the whole DNN to record coverage information. Thus, our approach has less interference with the DNN.
\end{itemize}

\subsection{Enhancing DNN with DeepGini}
\label{subsec:app}

Generally, we can add more tests to the training set and retrain the DNN to enhance its robustness.
In face of a large number of unlabeled tests and a limited time budget,
we cannot label all tests and use them to retrain the DNN.
DeepGini allows us to find and label as many misclassified tests as possible in a limited time budget.
We observe that the tests prioritized by DeepGini at the front are more effective to improve DNN quality than the tests prioritized at the back.
Meanwhile,
our empirical study shows that tests prioritized by DeepGini at the front are more effective to improve DNN quality than the tests prioritized at the front but by coverage-based prioritization techniques.

The principle behind the effectiveness of DeepGini actually follows the theory of \textit{active machine learning},
which prefers the tests near the decision boundary (i.e., tests that the DNN is least certain how to label or, equivalently, tests that have the highest value of $\xi(t)$) to actively enhance a deep learning system.
We omit the details of active learning here because it is not our contribution. Interested readers can refer to the literature~\cite{settles2009active} for more details.

To sum up, DeepGini provides not only a test prioritization method but also a technique to enhance the robustness of DNN in a limited time budget.

\begin{table*}[t]
	\centering
	\caption{Datasets and DNN models.}
	\label{tab:dataset}
	\def\arraystretch{0.9}
	\begin{tabular}{l|cccc|cc}
		\toprule
		\textbf{Dataset} & \textbf{Description}      & \textbf{DNN Model} & \textbf{\#Neurons} & \textbf{\#Layers} & \# Original Tests & \# Adversarial Tests \\
		\midrule
		\multirow{2}{*}{MNIST}   & \multirow{2}{*}{Digits 0$\sim$9}  & LeNet-1   & 42    & 5     & \multirow{2}{*}{10,000} & 39,854     \\
	                           	 &                                   & LeNet-5   & 258   & 7     &                         & 39,705  \\
		\hline
		\multirow{2}{*}{CIFAR-10}& \multirow{2}{*}{Images with 10 classes} & ResNet-20    & 698   & 20  & \multirow{2}{*}{10,000}& 40,000\\
		                         &                                         & VGG-16       & 7242  & 21  &                        & 8,000 \\
		\hline
		\multirow{2}{*}{Fashion} & \multirow{2}{*}{Zalando's article images} & LeNet-1    & 42   & 5   &\multirow{2}{*}{10,000} & 39,992 \\
		                         &                                           & ResNet-20  & 698  & 20  &                        & 39,905 \\
		\hline
		
		\multirow{2}{*}{SVHN}  & \multirow{2}{*}{Street view house numbers} & LeNet-5  & 258  & 7   & \multirow{2}{*}{26,032} & 104,037  \\
		                       &                                            & VGG-16   & 7242 & 21  &                         & 8,000\\
		\bottomrule
	\end{tabular}
\end{table*}

\section{Experiment Design}
\label{sec:expdesign}

In this section, we introduce the experimental setup,
including the datasets and DNN models we used, approaches to generating adversarial tests, the baseline approaches we compared with, and the research questions we study in the experiments.
To conduct the experiments, we implement our approach as well as various neuron-coverage-based test prioritization methods
upon Keras 2.1.3 with TensorFlow 1.5.0.\footnote{\url{https://faroit.github.io/keras-docs/2.1.3/}}$^,$\footnote{\url{https://github.com/tensorflow/tensorflow/releases}} All of our implementation can be accessed via: \url{https://github.com/deepgini/deepgini}.
All experiments were performed on a Ubuntu 16.04.5 LTS server 
with one NVIDIA GTX 1080Ti GPU, two 12-core processors ``Intel(R) Core(TM) i7-6850K CPU @ 3.60GHz'', and 64GB physical memory.

\subsection{Datasets and DNN Models}

As shown in Table \ref{tab:dataset}, for evaluation,
we select four popular publicly-available datasets, i.e., 
MNIST,\footnote{\url{http://yann.lecun.com/exdb/mnist/}} 
CIFAR-10,\footnote{\url{https://www.cs.toronto.edu/~kriz/cifar.html}}  
Fashion,\footnote{\url{https://research.zalando.com/welcome/mission/research-projects/fashion-mnist/}}  
and SVHN.\footnote{\url{http://ufldl.stanford.edu/housenumbers/}}

The MNIST dataset is for handwritten digits recognition, containing 70,000 input data in total, of which 60,000 are training data and 10,000 are test data. 

The CIFAR-10 dataset consists of 60,000 32x32 colour images in 10 classes, with 6,000 images per class. There are 50,000 training images and 10,000 test images. 

Fashion is a dataset of Zalando's article images consisting of a training set of 60,000 examples and a test set of 10,000 examples. Each example is a 28x28 gray-scale image, associated with a label from 10 classes.

SVHN is a real-world image dataset that can be seen as similar in flavor to MNIST (e.g., the images are of small cropped digits), but incorporates an order of magnitude more labeled data (over 600,000 digit images).

To demonstrate the generalizability of our approach, for every data sets, we use two prevalent DNN models in our evaluation.
The size of these DNN models ranges from tens to thousands of neurons, exhibiting the diversity of the DNN models to some degree.


\begin{table}[t]
	\centering
	\caption{Configuration parameters for the coverage criteria.}
	\begin{tabular}{c|c|ccc}
		\toprule
		\textbf{ID} & \textbf{Criteria} & \multicolumn{3}{c}{\textbf{Configuration Parameter} $k$} \\
		\midrule
		1&	NAC($k$)                    & 0  & 0.75  & -   \\
		2&	KMNC($k$)                   & 1,000  & 10,000  & -   \\
		3&	NBC($k$)                   & 0  & 0.5  & 1    \\
		4&	SNAC($k$)                   & 0    &  0.5  & 1    \\
		5&	TKNC($k$)                   &  1  & 2   & 3    \\
		6&	LSC($k$)                    &(1000, 100)   &  -   & - \\
		7&	DSC($k$)                    &(1000, 2)     &  -   & -  \\
		\bottomrule
	\end{tabular}
	\label{tab:param}
\end{table}

\subsection{Adversarial Test Input Generation.}
In addition to prioritizing original tests in the datasets,
we also conduct an experiment to prioritize adversarial tests.
As the previous study~\cite{ma2018deepgauge},
we use four state-of-the-art methods to generate adversarial tests,
including FGSM~\cite{goodfellow2015explaining}, BIM~\cite{kurakin2017adversarial}, JSMA~\cite{papernot2016limitations}, and CW~\cite{carlini2017towards}.
These techniques generate tests through different minor perturbations on a given test input.
Table~\ref{tab:dataset} shows the total number of adversarial tests generated by these methods in 12 hours.

\subsection{Baseline Approaches}
We compare our approach to coverage-based methods that use eleven different neuron coverage criteria as introduced in Section~\ref{sec:bkg}.
Since these neuron coverage criteria contain configurable parameters,
as shown in Table~\ref{tab:param},
we use various parameters as suggested by their original authors.

Each comparison experiment is conducted in four modes with regard to two aspects:
(1) using CTM or CAM to prioritize tests; and (2) prioritizing tests in the original datasets or 
prioritizing tests that combine the original tests and the adversarial tests.

\subsection{Research Questions}
\label{sec:rq}

DeepGini is designed for facilitating the testers of DNN-based systems to quickly identify misclassified tests and effectively enhance the robustness.
Based on this goal, we empirically explore the following three research questions (RQ).

\textbf{RQ1. Effectiveness:} Can DeepGini find a better permutation of tests than neuron-coverage-based methods? 

We provide answers to RQ1 by computing the values of Average Percentage of Fault-Detection (APFD) metric~\cite{yoo2012regression}.
Higher APFD values denote faster misclassification-detection rates. 
When plotting the percentage of detected
misclassified tests against the number of prioritized tests, APFD can be calculated as the area below the
plotted line.
It is also noteworthy that although an APFD value ranges from 0 to 1, 
an APFD value not close to 1 does not mean that the prioritization is ineffective.
This is mainly because the theoretically maximal APFD value is usually much smaller than 1 \cite{yoo2012regression}.
Formally, for a permutation of $n$ tests in which there are $k$ tests will be misclassified, 
let $o_i$ be the order of the first test that reveals the $i$th misclassified test.
The APFD value for this permutation can be calculated as following:
$$
APFD = 1- \frac{\Sigma_{i=1}^k o_i}{kn} + \frac{1}{2n}
$$
To be clear, assuming the theoretical minimum and maximum of the APFD value is \textit{min} and \textit{max}, respectively,
we normalize the APFD value from $[\textit{min}, \textit{max}]$ to $[0, 1]$ so that
a prioritization method is better if the APFD value is closer to 1 and is worse if the APFD value is closer to 0.

\textbf{RQ2. Efficiency:} Is DeepGini more efficient than neuron-coverage-based methods?

We provide answers to RQ2 by recording the time cost of prioritization. A prioritization method may be very costly because the number of tests is usually very large for a DNN system.
According to our evaluation, some prioritization methods cannot finish in several hours, which is not practical in an industrial setting.

\textbf{RQ3. Guidance:} Can DeepGini guide the retraining of an DNN to improve its accuracy?

Since the DNNs already have very high accuracy on the original tests ($>90\%$ or even $>95\%$),
we cannot clearly show the accuracy improvement of these DNNs.
Thus, we leverage the adversarial tests to answer RQ3.
For each model, we evenly partition adversarial test set into a testing set $T$ and a validation set $V$ for the following experiment.

After prioritizing the tests in $T$, we add back the first 1\%, 2\%, $\cdots$, 10\% tests into the training set and retrain a new DNN.
We do not add more than 50\% tests to retrain a DNN because we observe that the accuracy of the DNN will not significantly change with more tests.
Using the validation set $V$, we compute the accuracy of the new DNN.
We repeat the experiment using DeepGini and other coverage metrics and compare the accuracy of the retrained DNNs.
According to the experimental results,
we answer RQ3 that
DeepGini can provide guidance for more effective retraining
against the coverage-based methods.

\section{Result Analysis}
\label{sec:resultanalysis}


In this section, we present the results of test prioritization (RQ1 and RQ2) and then analyze whether our approach can better guide the retraining of DNNs (RQ3).
Due to the page limit, we cannot present all results in detail.
All other results show similar trends and are available online: \url{https://github.com/deepgini/deepgini}.

\begin{table*}[t]
  \centering
  \caption{Results of Prioritization (MNIST with LeNet5)}
    \def\arraystretch{0.9}
    \begin{tabular}{c|c|c|c|c|c|c|c|c|c|c|c|c|c}
    \toprule
    \multirow{3}{*}{\textbf{Metrics}} & \multirow{3}{*}{\textbf{Param.}} & \multicolumn{6}{c|}{Original Tests}            & \multicolumn{6}{c}{Original Tests + Adv} \\\cline{3-14}
          &       & \multicolumn{2}{c|}{Max Cov.} & \multicolumn{2}{c|}{CTM} & \multicolumn{2}{c|}{CAM} & \multicolumn{2}{c|}{Max Cov.} & \multicolumn{2}{c|}{CTM} & \multicolumn{2}{c}{CAM} \\\cline{3-14}
          &       &   \%    &   \#    & {Time (s)} & {APFD} & {Time (s)} & {APFD} &   \%    &  \#     & {Time (s)} & {APFD} & {Time (s)} & {APFD} \\
    
    \midrule
    
    \multirow{2}[0]{*}{NAC} & 0     & {100} & {1} 	& {2} & {0.638} & 2     & {0.638} & {100} 	& {1} 	& {11} & {0.340} & 11    & {0.340} \\
          					& 0.75  & {84} 	& {22} 	& {2} & {0.385} & 4     & {0.384} & {86} 	& {21} 	& {11} & {0.307} & 16    & {0.307} \\
          
    \midrule
    
    \multirow{3}[0]{*}{NBC} & 0     & {8} 	& {38} 	& {3} & {0.638} & 9    	& {0.638} & {15} 	& {56} 	& {14} & {0.339} & 40    & {0.339} \\
          					& 0.5   & 0.97  & 5   	& {2} & {0.638} & 5     & {0.637} & {3} 	& {11} 	& {13} & {0.400} & 20    & {0.400} \\
          					& 1     & 0.39  & 3   	& {3} & {0.638} & 6     & {0.637} & {2} 	& {7} 	& {13} & {0.400} & 20    & {0.400} \\
          
    \midrule
    
    \multirow{3}[0]{*}{SNAC} 	& 0     & {14} 	& {35} 	& {3} & {0.639} & 9    & {0.639} 	& {22} 	& {48} 	& {13} & {0.340} 	& 31    & {0.340} \\
          						& 0.5   & 2   	& 5   	& {3} & {0.639} & 7    & {0.639}	& {7} 	& {11} 	& {13} & {0.340} 	& 22    & {0.340} \\
          						& 1     & 0.78  & 3	   	& {3} & {0.638} & 8    & {0.638}	& {4} 	& {7} 	& {13} & {0.340} 	& 20    & {0.340} \\
          
    \midrule\midrule
    
    \multirow{3}[0]{*}{TKNC} 	& 1     & {66} & {86} 	& N/A   & N/A   & 11     & {0.023} & {74} & {96} & N/A   & N/A   & 59    & {0.001} \\
          						& 2     & {73} & {67} 	& N/A   & N/A   & 10     & {0.023} & {79} & {70} & N/A   & N/A   & 48    & {0.001} \\
          						& 3     & {76} & {57} 	& N/A   & N/A   & 10     & {0.023} & {81} & {55} & N/A   & N/A   & 43    & {0.001} \\
    
    \midrule
          
    LSC   				& (1000, 100) 	& {22} & {220} 	& N/A 	& N/A 	& 9     & {0.658} 	& {98} & {982} & N/A & N/A & 36    & {0.503} \\
    
    \midrule
    
    DSC   				& (1000, 2) 	& {53} & {531} & N/A 	& N/A & 1177   & {0.658}  	& {97} & {974} & N/A & N/A & 4738  & {0.490} \\
          
    \midrule\midrule
    
    \multirow{2}[0]{*}{KMNC} 	& 1000  & {63} & {8814} & N/A   & N/A   & 34045  & {0.599}	& T/O 	& T/O 	& N/A   & N/A  & T/O & T/O \\
          						& 10000 & T/O  & T/O  	& N/A  & N/A  	& {T/O} & {T/O} 	& T/O  	& T/O  	& N/A  	& N/A  & {T/O} & {T/O} \\
          
    \midrule\midrule

    \textbf{DeepGini} & N/A     & N/A   & N/A   & N/A   & N/A   & \textbf{0.45}  & \textbf{0.984} & N/A   & N/A   & N/A   & N/A   & \textbf{2}     & \textbf{0.991} \\
    \bottomrule
  
    \multicolumn{14}{l}{Max Cov.: The maximum coverage rate of the tests (\%) and the number of tests to achieve the rate (\#).}\\
    \multicolumn{14}{l}{N/A: Not applicable in theory; T/O: Time out, i.e., cannot get result after running for 12 hours.}
    
    \end{tabular}%
  \label{tab:apfd_of_mnist}%
\end{table*}%


\begin{figure*}[h]
\centering
\begin{subfigure}[b]{.85\linewidth}
  \centering
  \includegraphics[width=0.495\linewidth]{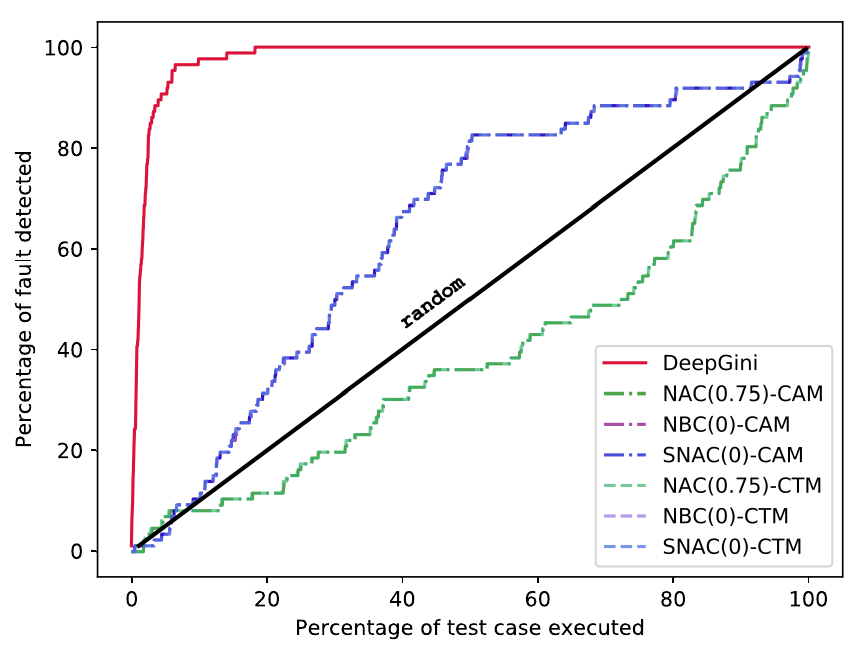}
  \includegraphics[width=0.495\linewidth]{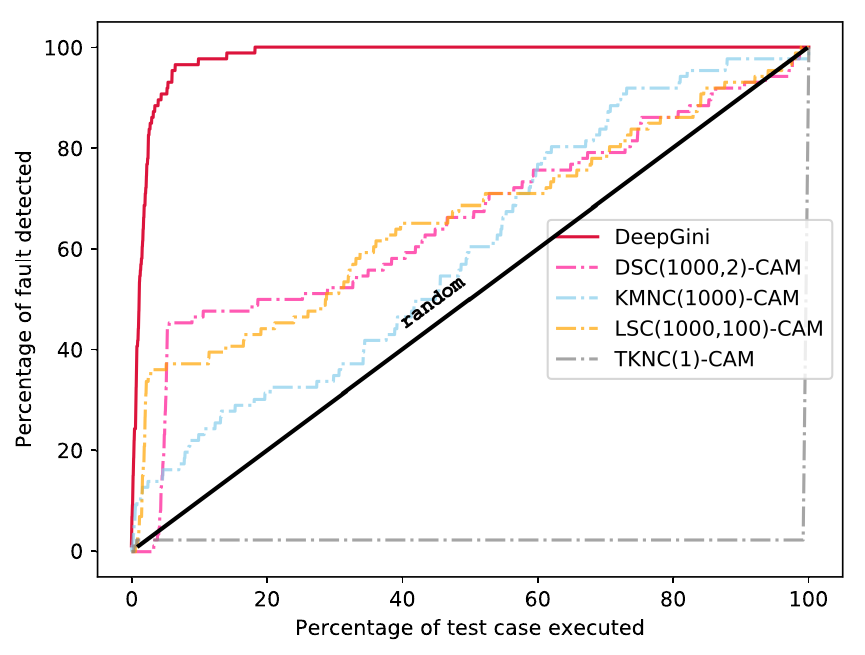}
  \caption{}
  \label{fig:apfd-a}
\end{subfigure}
\begin{subfigure}[b]{.85\linewidth}
  \centering
  \includegraphics[width=0.495\linewidth]{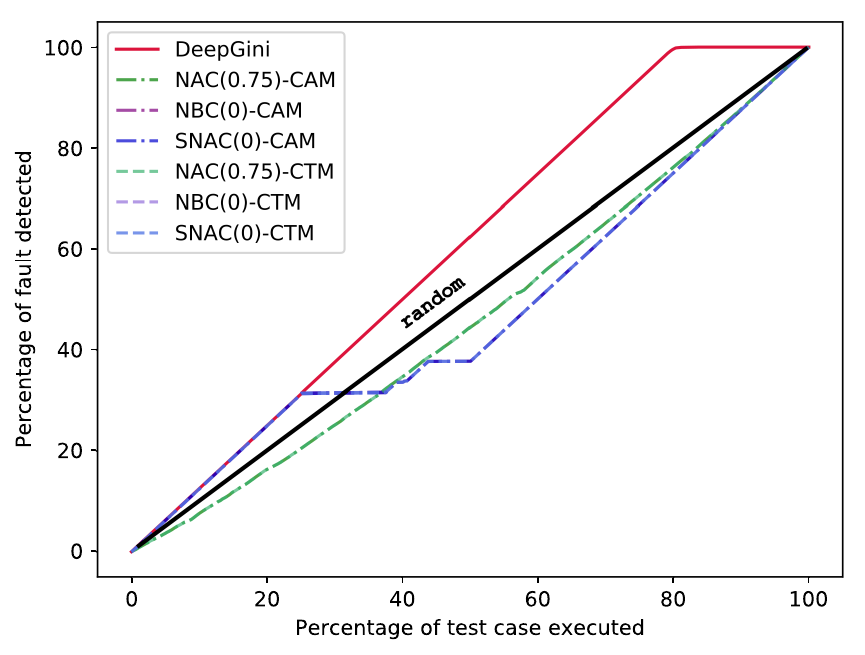}
  \includegraphics[width=0.495\linewidth]{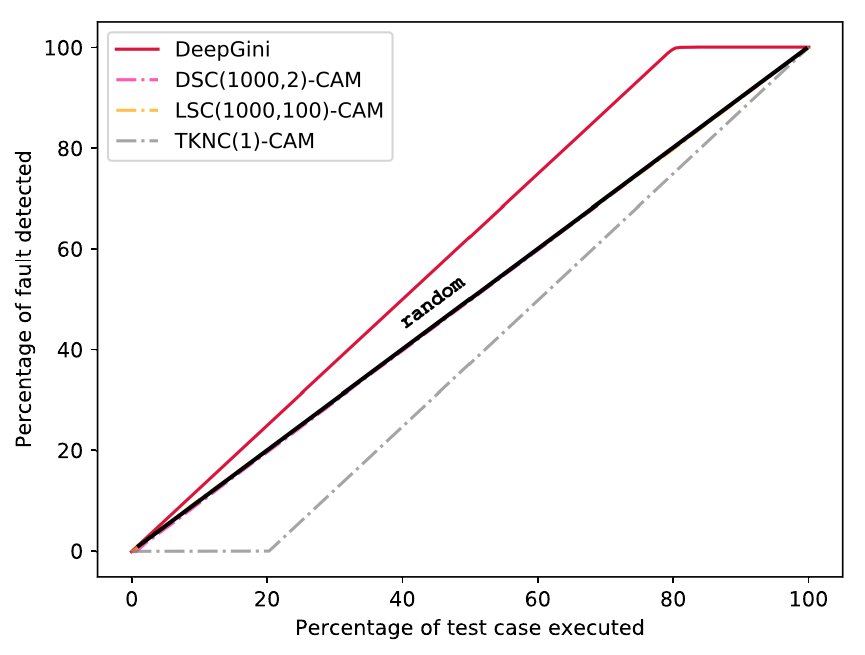}
  \caption{}
  \label{fig:apfd-b}
\end{subfigure}%
  \caption{Test prioritization for MNIST with LeNet5.
    X-Axis: the percentage of prioritized tests (sub-figure a), or percentage of both original and adversarial tests (sub-figure b);
    Y-Axis: the percentage of detected misclassified tests.}
\label{fig:apfd}
\end{figure*}



\subsection{Effectiveness and Efficiency (RQ1 \& RQ2)}

Table~\ref{tab:apfd_of_mnist} lists the results of the LeNet5 net on the MNIST dataset as an example.
We summarize our findings in Table~\ref{tab:results_of_prioritization} and discuss them at the end of this subsection.
Based on the feature of these criteria, we compare the results of DeepGini with existing coverage-based methods in two groups:
(1) NAC, NBC, and SNAC; (2) TKNC, LSC, DSC, and KMNC.

\subsubsection{Comparing with NAC, NBC, and SNAC}

As shown in Table~\ref{tab:apfd_of_mnist}, DeepGini is capable of prioritizing tens of thousands of tests within 2 seconds.
The APFD value of DeepGini is very close to 1, which implies that our approach is very close to the theoretically best approach.
Table \ref{tab:apfd_of_mnist} also shows that fewer than 0.5\% of tests are sufficient to achieve the maximum coverage rate of the three coverage criteria: NAC, NBC, and SNAC, regardless of their parameter settings.
In the 10,000 original tests of MNIST, a very small amount of tests are sufficient for achieving the maximum coverage rate: for NAC(0.75), 22 tests can reach the maximum coverage (84\%); for NBC(0.5), 5 tests can reach the maximum coverage (0.97\%); and for SNAC(0.5), 5 tests can reach the maximum coverage (2\%).
Since we achieve the maximum coverage rate very quickly,
CAM will degenerate into CTM very quickly.
Thus, the effectiveness and the efficiency of CAM are almost the same as CTM for these datasets.

\textbf{Effectiveness.} Using MNIST as an example,
Figure \ref{fig:apfd} plots the number of detected misclassified tests against the prioritized tests.
We have two observations from this figure.
First, DeepGini achieves a higher APFD value in comparison to NAC, NBC, and SNAC.
Second, as illustrated by the dotted lines in Figure \ref{fig:apfd}, neuron-coverage-based prioritization methods, sometimes, are even worse than the random prioritization strategy.

\textbf{Efficiency.} 
Table~\ref{tab:apfd_of_mnist} shows that, for the original test sets, the CAM-based prioritization processes of NAC, NBC, and SNAC cost at least 2, 5, 7 seconds respectively, while DeepGini costs only 0.45 seconds. 
Similarly, for the test set with the adversarial examples, we observe that the CAM-based prioritization processes of the three baselines cost more than 11 seconds, while DeepGini costs only 2 seconds. 
This data shows that DeepGini has a higher efficiency in comparison with NAC, NBC, and SNAC. 


\subsubsection{Comparing with TKNC, LSC, DSC, and KMNC}

As discussed in Section \ref{subsec:coverage_dnn},
every single test has the same coverage rate of TKNC, LSC, and DSC, regardless of its parameter $k$.
Thus, CTM does not work if we use these coverage metrics to prioritize tests.
Unfortunately, CAM does not work well using these coverage metrics either.
The main reason is that fewer than 5\% of tests are enough to achieve the maximal coverage rate.
After prioritizing the 5\% tests, CAM is degenerate into CTM, which does not work as explained above.

Similarly, CTM does not work if we use  KMNC to prioritize tests, because almost all single tests have the same coverage rate of KMNC, regardless of its parameter $k$.
However, KMNC can work with the CAM prioritization method. 
Thus, we only compare KMNC-based CAM with our prioritization method.


\textbf{Effectiveness.} 
In the example of LeNet-5 on MNIST, Figure~\ref{fig:apfd} plots the prioritization results, in which the curve of our method goes up far more quickly than the four baseline methods.
In the original test set, while DeepGini has obtained the APFD value of 0.984, TKNC, LSC, DSC, and KMNC only obtain 0.023, 0.658, 0.658, and 0.599. 
We can observe similar trends for the test set with adversarial examples.
This result implies that the DeepGini significantly outperforms the four baselines regarding the effectiveness of prioritizing tests.

\textbf{Efficiency.} 
Table \ref{tab:apfd_of_mnist} shows that prioritization methods based on these coverage metrics are 20$\times$-2000$\times$ slower than our method. 
One special case is KMNC-based CAM method.
When prioritizing tests using KMNC-based CAM method, we observe serious efficiency issues.
That is because the time complexity of KMNC-based CAM method is very high, we cannot finish prioritizing tests in an acceptable time budget.
For MNIST, we cannot succeed in prioritizing tests using the method in 12 hours.
Considering MNIST is a relatively small dataset, the efficiency problem would be the bottleneck of applying KMNC-based CAM method in practice.

\begin{table*}[!ht]
\caption{The DNNs' accuracy value after retraining with first 10\% prioritized tests.}
	\def\arraystretch{0.9}
\begin{tabular}{lcc|cc|cc|cc|l}
\hline
                   & \multicolumn{2}{c|}{MNIST} & \multicolumn{2}{c|}{CIFAR-10} & \multicolumn{2}{c|}{FASHION}  & \multicolumn{2}{c|}{SVHN}     & \multirow{2}{*}{Avg} \\
                   & LeNet-1       & LeNet-5    & ResNet-20     & VGG-16        & LeNet-1       & ResNet-20     & LeNet-5       & VGG-16        &                      \\ \hline
NAC(0.75)-CTM      & 0.89          & 0.93       & 0.93          & 0.79          & 0.81          & 0.91          & 0.82          & 0.67          & 0.84                 \\
NAC(0.75)-CAM      & 0.83          & 0.85       & 0.92          & 0.76          & 0.78          & 0.94          & 0.8           & 0.74          & 0.83                 \\
NBC(0)-CAM         & 0.84          & 0.88       & 0.92          & 0.78          & 0.77          & 0.94          & 0.81          & 0.75          & 0.84                 \\
NBC(0)-CTM         & 0.91          & 0.84       & 0.93          & 0.75          & 0.81          & 0.95          & 0.82          & 0.75          & 0.85                 \\
SNAC(0)-CTM        & 0.84          & 0.84       & 0.93          & 0.79          & 0.83          & 0.95          & 0.82          & 0.75          & 0.84                 \\
SNAC(0)-CAM        & 0.82          & 0.87       & 0.91          & 0.76          & 0.79          & 0.94          & 0.81          & 0.76          & 0.83                 \\
LSC(1000, 100)-CAM & 0.84          & 0.86       & 0.92          & 0.81          & 0.8           & 0.94          & 0.81          & 0.74          & 0.84                 \\
DSC(1000, 100)-CAM & 0.84          & 87         & 0.92          & 0.82          & 0.81          & 0.94          & 0.82          & 0.75          & 0.85                 \\
TKNC(1)-CAM        & 0.83          & 0.88       & 0.92          & 0.8           & 0.78          & 0.94          & 0.81          & 0.8           & 0.85                 \\
DeepGini           & \textbf{0.99} & \textbf{1} & \textbf{0.98} & \textbf{0.93} & \textbf{0.89} & \textbf{0.98} & \textbf{0.93} & \textbf{0.97} & \textbf{0.96}        \\ \hline
\end{tabular}
\label{tab:retrain-accuracy}
\end{table*}

\subsection{Guidance (RQ3)}

Figure~\ref{fig:retrain} illustrates the experimental results for RQ3.
Like the previous experiments, we put the coverage metrics into two groups.
Figure~\ref{fig:retrain}-a and Figure~\ref{fig:retrain}-b demonstrate the results of LeNet-5 on MNIST.
The curves show the accuracy of the DNN after retraining with 1\%, 2\%, ..., 10\% tests.
As we introduced in Section~\ref{sec:rq}, the testing set $T$ and validation set $V$ is divided from the adversarial test data.
This setting makes initial accuracy of the DNN are the same for all metrics, i.e., it is 0 without retraining. 
Note that, because KMNC-based prioritization technique is not scalable, we cannot finish the experiment of RQ3 in 12 hours.


The curves show that retraining with the tests prioritized by DeepGini is more effective in improving the accuracy of DNNs.    
This outperformance is clear in the retraining with all sizes of the test sets. 
We also present the accuracy value after retraining the DNN with the first 10\% prioritized tests in Table~\ref{tab:retrain-accuracy}. 
From the table, we can observe that the outperformance can be found across all combinations of datasets and DNN models.  
In average, while the baseline criteria can reach 0.83 --- 0.85 accuracy, DeepGini can improve the accuracy value to 0.96, which is close to 1.


\begin{figure}[ht]
\centering
\begin{subfigure}[b]{.85\linewidth}
  \centering
  \includegraphics[width=\linewidth]{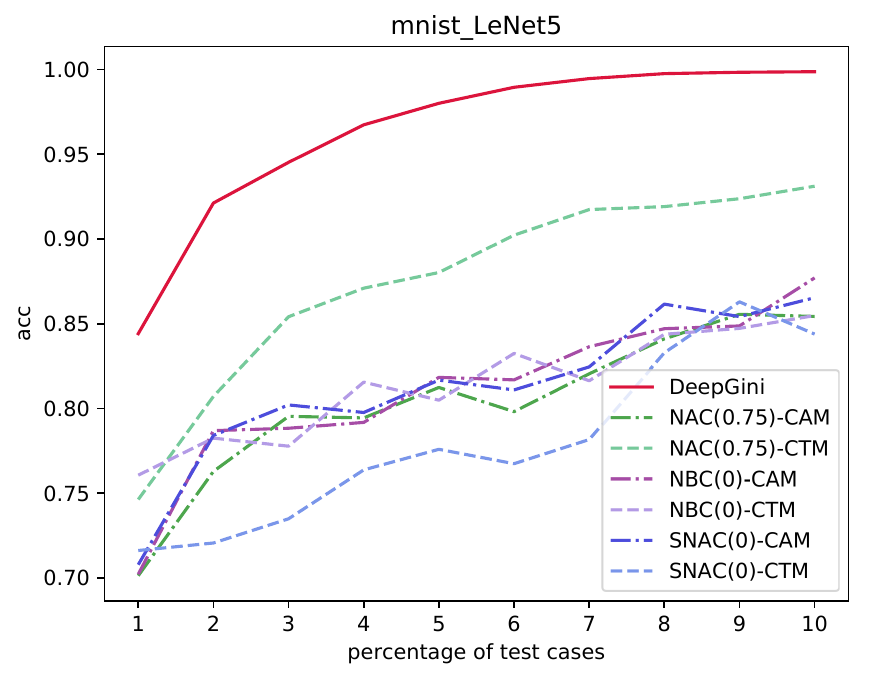}
  \caption{}
  \label{fig:retrain-a}
\end{subfigure}
\begin{subfigure}[b]{.85\linewidth}
  \centering
  \includegraphics[width=\linewidth]{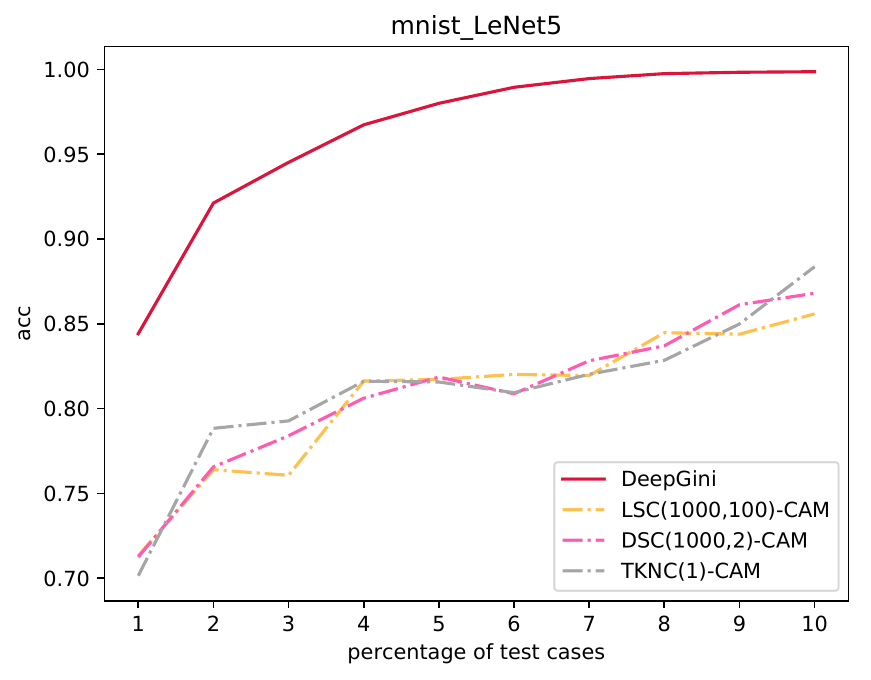}
  \caption{}
  \label{fig:retrain-b}
\end{subfigure}%
\caption{Enhancing the robustness of the DNN with prioritized tests (MNIST with LeNet-5).}
\label{fig:retrain}
\end{figure}

\subsection{Discussion}

For each of these existing neural coverage based criteria, we summarize our findings in Table~\ref{tab:results_of_prioritization}.
Although structural coverage criteria are very effective in classic testing methods,
they are not effective in the new scenario of DNN testing.
More specifically,
for some of these criteria,
we observe that a very small part, around 1\% to 5\%, of original tests are sufficient to achieve the maximal coverage.
For some of these criteria,
the coverage criteria cannot distinguish different tests.
These results extend 
the existing discussion on the effectiveness of structural coverage~\cite{li2019structural} -- 
some researchers cast doubts on the effectiveness of neuron-coverage-guided test generation techniques. 
For instance, in our experiment, the random prioritization strategy even outperforms NAC and TKNC, which implies that these coverage criteria could be misleading in finding new incorrect DNN behaviors.

Further, DeepGini is designed based on the assumption that the test produces similar probabilities for all classes has a higher probability of being misclassified. 
Admittedly, this assumption is not necessarily true. 
There could be some variance/difference in the probabilities of the different classes, yet a wrong class may be given a higher probability than the expected class.
However, the results of RQ1 shows the APFD value of DeepGini is higher than baselines, which indicates that it can efficiently detect a large number (but not all) of misclassified tests in comparison with baselines. 

On the other hand, 
test prioritization should be scalable and efficient so that it can be applied in the scenario of DNN testing.
Such scalability and efficiency requirement are very necessary because the number of tests is usually very large in DNN testing,
which is different from conventional software testing.
For instance, we found that the KMNC criteria failed to work well regarding both the effectiveness and efficiency. 
In the future, more research should be conducted to investigate its potential and improvement.  
Considering MNIST is a very basic dataset for deep learning, 
we suggest software engineers do not apply these criteria in their engineering practice before they are improved.

\begin{table*}[!ht]
\small
	\centering
	\caption{Summary of Our Findings on Test Prioritization}\label{tab:results_of_prioritization}
	\begin{tabulary}{1.0\textwidth}{r|L}
		\toprule
		\textbf{Metrics} & \textbf{Findings} \\
		\midrule
		\multirow{3}{*}{NAC/NBC/SNAC}
		&1. CAM will quickly degenerate into CTM for these metrics because only a small number of tests can achieve the maximum coverage.\\
		&2. CTM is not effective and even worse than random prioritization when we use these metrics.\\
		\hline
		
		\multirow{3}{*}{TKNC/LSC/DSC}
		&3. CAM will quickly degenerate into CTM for TKNC/LSC/DSC because only a small number of tests 
		can achieve the maximum coverage rate.         \\
		&4. CTM does not work when TKNC/LSC/DSC are used because all single tests have the same coverage rate. \\
		&5. Computing LSC/DSC additionally relies on the training set.\\
		\hline
		
		\multirow{2}{*}{KMNC}
		&6. CAM is not scalable due to its high complexity when KMNC is used. \\
		&7. CTM does not work when KMNC is used because almost all single tests have the same coverage rate.\\
		\hline
		
		\multirow{3}{*}{\textbf{DeepGini}}
		&8. DeepGini is the most effective and efficient metric for test prioritization regarding the APFD value and the time cost.\\
		&9. DeepGini does not relies on anything except for the tests and the DNN to test.\\
		\bottomrule
	\end{tabulary}
\end{table*}

\section{Related Work}
\label{sec:relwork}

We discuss the related work in two groups: 
(1) test prioritization methods for conventional software and
(2) testing techniques for deep learning systems.

\subsection{Test Prioritization Techniques}

Test prioritization seeks to find the ideal ordering of tests,
so that software testers or developers can obtain maximal benefit in a limited time budget.
The idea was first mentioned by Wong et al.~\cite{wong1998effect} and then the technique was proposed by Harrold and Rothermel~\cite{rothermel1996analyzing,harrold1999testing}
in a more general context.
We observe that such an idea from the area of software engineering can significantly reduce the effort of labeling for deep learning systems.
This is mainly because a deep learning system usually has a large number of unlabeled tests 
but developers only have limited time for labeling.

Coverage-based test prioritization, such as the CAM and CTM methods studied in this paper, is one of the most commonly studied prioritization techniques. 
In conventional software engineering, we can obtain a new prioritization method when a different coverage criterion is applied.
Rothermel et al.~\cite{rothermel1999test,rothermel2001prioritizing} reported empirical studies of several coverage-based approaches, driven by branch coverage, statement coverage, and so-called FEP, a coverage criterion inspired by mutation testing~\cite{budd1981mutation}.
In addition, 
Jones and Harrold~\cite{jones2003test} reported that MC/DC, a stricter form of branch coverage, is also applicable to coverage-based test prioritization.
Different from the above techniques, we focus on testing and debugging for deep learning systems. Thus, we studied test prioritization based on coverage criteria that specially proposed for DNNs. Our study demonstrated that, using these coverage criteria, coverage-based test prioritization is not effective and efficient. Sometimes, its effectiveness is even worse than random prioritization. Instead, our approach uses a simple metric that does not require to profile the DNNs but is effective and also efficient.

We notice that, in software engineering, there are also many prioritization techniques based on metrics other than coverage criteria, including distribution-based approach~\cite{leon2003comparison}, human-based approach~\cite{tonella2006using,yoo2009clustering}, history-based approach~\cite{sherriff2007prioritization}, model-based approach~\cite{korel2007model,korel2005test,korel2008application}, and so on. 
These techniques are specially-designed for conventional software systems instead of deep learning systems. Making them applicable to deep learning systems may require non-trivial efforts of re-design. We leave them as our future work.

\subsection{Testing Deep Learning Systems}

In conventional practice, machine learning models were mainly evaluated using available validation datasets~\cite{witten2016data}.
However, these datasets usually cannot cover various corner cases that may induce unexpected behaviors~\cite{pei2017deepxplore,tian2018deeptest}.
To further ensure the quality of a deep learning system, software-engineering researchers have designed many testing approaches.
Pei et al.~\cite{pei2017deepxplore} proposed \textit{DeepXplore}, the first white-box testing framework, to identify and generate the corner-case inputs that may induce different behaviors over multiple DNNs.
Ma et al.~\cite{ma2018deepmutation} presented a mutation testing framework for DNNs aiming at evaluating the quality of datasets.
Tian et al.~\cite{tian2018deeptest} presented \textit{DeepTest} to generate test inputs by maximizing the numbers of activated neurons via a basic set of image transformations. 
Zhang et al.~\cite{zhang2018deeproad} employed generative adversarial network to transform the driving scenes into various weather conditions, which increases the diversity of datasets.
Different from the above techniques that rely on solid test oracle, 
our method focuses on the problem that we usually have a large number of tests without test oracle.
We observe that the idea of test prioritization can enable developers to obtain as many misclassified tests as possible in a limited time budget,
thereby easing the burden of labeling.
However, in comparison with traditional software programs, the modern DNNs often consist of millions of neurons and hundreds of layers, which naturally enlarges its potential testing space.
While the sophisticated internal logic of a DNN makes it challenging to adopt the idea of coverage criteria to test prioritization,
this paper introduces a new metric that only analyzes the output space of a DNN and is able to effectively guide the test prioritization.
We notice that researchers have proposed some preliminary prioritization methods for testing DNNs~\cite{zhang2019noise,byun2019input}.
They use different methods to prioritize the tests but failed to compare their techniques with classic coverage-based methods.
Furthermore, they did not provide any information on the capability of enhancing the DNN robustness.

To guide the testing techniques for DNNs,
Pei et al.~\cite{pei2017deepxplore} introduced neuron activation coverage to measure the differences of the execution of test data. 
Ma et al.~\cite{ma2018deepgauge} designed a set of multi-level and multi-granularity testing criteria for assessing the quality of testing of deep learning systems.
Our approach has shown that it is not effective or efficient to prioritize tests based on these coverage criteria.
Sun et al.~\cite{sun2018testing,sun2018concolic} presented a concolic testing framework that incrementally generates a set of test inputs to improve coverage by alternating between concrete execution and symbolic analysis.
The MC/DC-like coverage criteria proposed in the paper~\cite{sun2018testing,sun2018concolic} does not work for prioritization because of two reasons. First, since we need at least a pair of tests to compute the coverage rate, the coverage rate of a single test is meaningless. Thus, CTM does not work. Second, computing the coverage rate is of at least quadratic complexity. Thus, it is not scalable, just like the KMNC coverage shown in our evaluation. Since we only focus on coverage criteria published in peer-reviewed venues, these MC/DC-like metrics are not included but just briefly discussed here.
Different from such test generation techniques that also have the oracle problem, 
our approach attempts to prioritize tests so that the oracle problem is alleviated.

\section{Conclusion}
\label{sec:conc}

Based on a statistical view of DNN,
we have introduced an approach, namely DeepGini, to prioritizing testing data so that
we can improve the quality of DNN efficiently.
Experimental results demonstrate that it is more effective and efficient than coverage-based methods.
In real-world scenario, tests usually do not have labels and
we have to invest a lot of manpower to label them.
With such a prioritization method in hand,
we can achieve maximal benefit, even the labeling process is prematurely halted at some arbitrary point due to resource limits.

\section*{Acknowledgements}
We would like to thank anonymous reviewers for their insightful comments. This project was partially funded by the National Natural Science Foundation of China under Grant Nos. 61832009 and 61932012.
Qingkai Shi is the corresponding author.

\balance
\bibliographystyle{ACM-Reference-Format}
\bibliography{ref}

\end{document}